\theoremstyle{definition}
\newtheorem{definition}{Definition}
\theoremstyle{theorem}
\newtheorem{mytheory}{Theorem}
\theoremstyle{proof}
\theoremstyle{remark}
\newtheorem*{remark}{Remark}
\begin{document}
\title{Debiased Recommendation with User Feature Balancing}
\author{Mengyue Yang~}
 \affiliation{University College London} 
 \email{mengyue.yang.20@ucl.ac.uk}
\author{Guohao Cai~}
 \affiliation{Noah's Ark Lab, Huawei} 
 \email{caiguohao1@huawei.com}
\author{Furui Liu~} 
 \affiliation{Noah's Ark Lab, Huawei}
 \email{liufurui2@huawe.com}
\author{Zhenhua Dong~} 
 \affiliation{Noah's Ark Lab, Huawei}
 \email{dongzhenhua@huawei.com  }
\author{Xiuqiang He~} 
 \affiliation{Noah's Ark Lab, Huawei}
 \email{hexiuqiang1@huawei.com}
\author{Jianye Hao~} 
 \affiliation{Noah's Ark Lab, Huawei}
 \email{haojianye@huawei.com}
\author{Jun Wang}
 \affiliation{University College London} 
 \email{j.wang@cs.ucl.ac.uk}
\author{Xu Chen$^{*}$~}\thanks{$*$ Corresponding author}
 \affiliation{Beijing Key Laboratory of Big Data Management and Analysis Methods, Gaoling School of Artificial Intelligence, Renmin University of China}
 \email{successcx@gmail.com}

\begin{abstract}
Debiased recommendation has recently attracted increasing attention from both industry and academic communities.
Traditional models mostly rely on the inverse propensity score (IPS), which can be hard to estimate and may suffer from the high variance issue.
To alleviate these problems, in this paper, we propose a novel debiased recommendation framework based on user feature balancing.
The general idea is to introduce a projection function to adjust user feature distributions, such that the ideal unbiased learning objective can be upper bounded by a solvable objective purely based on the offline dataset.
In the upper bound, the projected user distributions are expected to be equal given different items.
From the causal inference perspective, this requirement aims to remove the causal relation from the user to the item, which enables us to achieve unbiased recommendation, bypassing the computation of IPS.
In order to efficiently balance the user distributions upon each item pair, we propose three strategies, including clipping, sampling and adversarial learning to improve the training process.
For more robust optimization, we deploy an explicit model to capture the potential latent confounders in recommendation systems.
To the best of our knowledge, this paper is the first work on debiased recommendation based on confounder balancing.
In the experiments, we compare our framework with many state-of-the-art methods based on synthetic, semi-synthetic and real-world datasets.
Extensive experiments demonstrate that our model is effective in promoting the recommendation performance.
\end{abstract}
\maketitle

\section{Introduction}
Recommender system, as an effective remedy for information overloading, is playing a key role in the modern E-commerce.
Traditional recommender models are directly learned based on the observed user behaviors, which may have been contaminated by the exposure or selection bias~\cite{DBLP:conf/icml/SchnabelSSCJ16}.
Typically, the users in a recommender system can only access and evaluate a small part of the whole items due to the effect of the former recommendation algorithm or the user intrinsic tendency. 
This means that the observed data is a skewed version of the real user preference, which may bias the recommender models and limit the performance.

For solving this problem, the most reliable method is conducting online random experiments.
However, in order to collect enough training samples, such random policy has to occupy the business traffic for a long time, which can be detrimental to the user experience.
As a result, recent years have witnessed many offline debiasing methods~\cite{DBLP:conf/icml/SchnabelSSCJ16,Non-display, chen2020bias, zhang2021causal, wei2021model, wang2021clicks}, among which the inverse propensity score (IPS) is one of the most popular strategies.
The basic idea of IPS is to adjust the offline data distribution to be align with the ideal learning objective by re-weighting the training samples.
While IPS-based models have been widely leveraged for debiased recommendation, there are some significant weaknesses:
to begin with, different from the classical IPS models~\cite{DBLP:journals/corr/SwaminathanJ15, DBLP:conf/nips/SwaminathanJ15, direct1}, which are mostly evaluated based on small scale or simulated datasets, in the recommendation domain, the sample weights are estimated from the noisy and extremely sparse datasets, which can be highly unreliable.
The error of the weights will be propagated to lower the final recommendation performance.
In addition, the IPS-based learning objective usually suffers from the high variance issue~\cite{DBLP:journals/corr/SwaminathanJ15, DBLP:conf/nips/SwaminathanJ15, DBLP:journals/jmlr/SwaminathanJ15, DBLP:conf/aaai/NaritaYY19}, which makes the learned model quite fragile in facing with the complex recommendation environments. 

In order to avoid these problems, in this paper, we propose a novel debiased recommendation framework based on confounder balancing.
In specific, we formulate the recommendation task by a causal inference tool called potential outcome framework (POF).
Based on POF, the ideal learning objective is induced by the causal graph presented in Figure~\ref{intro}(b), where there is no edge from the user to the item, and the items are uniformly exposed to each user.
However, in practice, the offline dataset is generared by the causal graph shown in Figure~\ref{intro}(c), where the exposure of an item is influenced by the user.
As a result, the above ideal objective is usually intractable only with the offline dataset.
To alleviate this problem, we introduce a projection function to adjust the user feature distributions, aiming to derive a solvable upper bound of the ideal learning objective. 
In this upper bound, the user feature distributions are expected to be balanced given different items.
This requirement actually aims to cut down the causal relation between the user and item for achieving debiased recommendation, where we do not need to compute IPS in the whole modeling process.

While this seems to be a promising idea, it is non-trivial due to the following challenges:
to begin with, while there are many studies on causal inference based on confounder balancing~\cite{DBLP:journals/corr/abs-2001-07426, shalit2017estimating, DBLP:conf/icml/JohanssonSS16, DBLP:conf/nips/LiuGRKFDB18, DBLP:conf/iclr/00010K21}, they mostly focus on treatment-effect estimation, where the loss function is limited to the mean square error (MSE) and the treatment is assumed to be binary.
However, in the recommendation problem, the user preference label is usually categorical implicit feedback, which should be better optimized with the cross-entropy loss.
Directly replacing the loss function in previous work is not easy, since they are highly tailored for MSE.
More principled methods are needed to derive the learning objective suitable for the recommendation task.
In addition, the large number of items in recommender systems makes the binary treatment assumption unpractical.
More effective and efficient objectives are needed to handle the large treatment space.
At last, there can be many unobserved factors (or called latent confounders) which simultaneously influence the items and user feedback, for example, the sales promotion or user emotions. 
Such factors make it hard to cut down the user-item causal relation only by adjusting user feature distributions.
How to model them is still unexplored in the recommendation domain.

For solving these challenges, we derive a general upper bound of the ideal learning objective based on Jasen's gap bound~\cite{gao2017bounds}, which is compatible with any loss function.
In this upper bound, the user distributions are expected to be equal given different items.
Since the number of items can be very large in practice, we design three strategies including clipping, sampling and adversarial learning to facilitate model optimization.
In order to handle the latent confounders, we deploy a neural model to infer them, which are then integrated into our framework to make more robust user feedback estimation.
Our main contributions can be summarized as follows:

(\romannumeral1) we propose to build debiased recommender models based on confounder balancing, which, to the best of our knowledge, is the first time in the recommendation domain.

(\romannumeral2) To achieve our idea, we derive a surrogate learning objective, which can be optimized based on the offline dataset.
To make this objective more effective, we propose three strategies to handle the potential large number of items, and deploy an explicit model to capture the latent confounders.

(\romannumeral3) We conduct extensive experiments based on different types of datasets to demonstrate the effectiveness of our framework.

\begin{figure}[t]
\centering
\setlength{\fboxrule}{0.pt}
\setlength{\fboxsep}{0.pt}
\fbox{
\includegraphics[width=.65\linewidth]{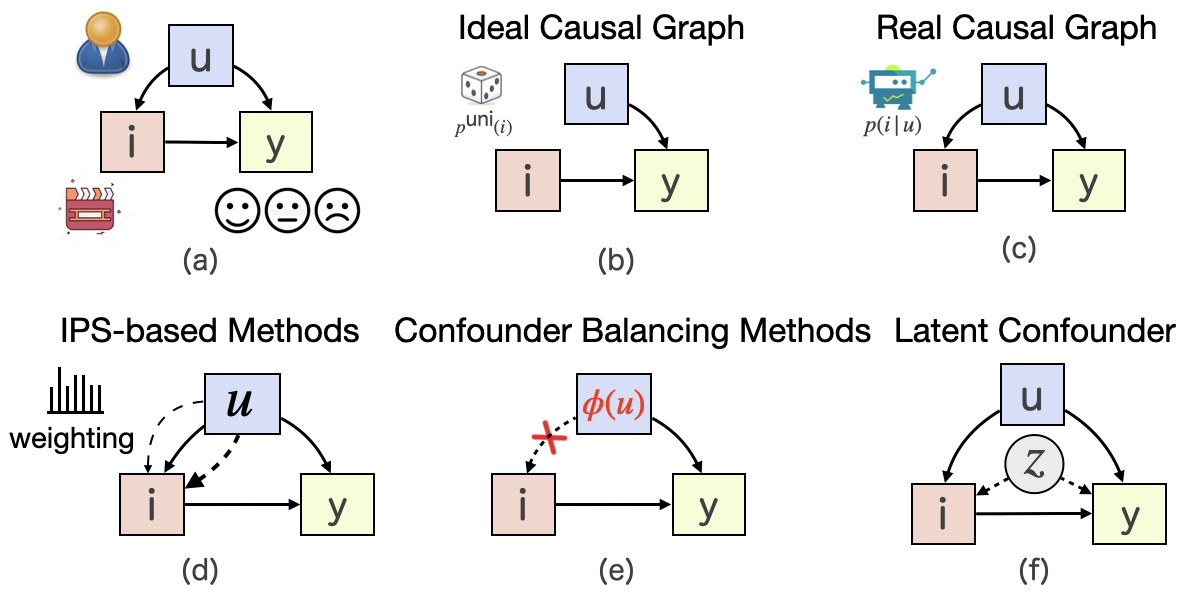}
}
\caption{
(a) Illustration of the recommendation task within the potential outcome framework.
(b) The causal graph of the ideal unbiased learning objective.
(c) The causal graph generating the offline dataset.
(d)-(e) Comparison on the debiasing mechanisms between the IPS-based and confounder balancing methods.
(f) Illustration of the latent confounders in potential outcome framework
}
\label{intro}
\end{figure}

\section{Preliminaries}
\subsection{Potential Outcome Framework}
Potential outcome framework (POF) is an effective tool for causal estimation and discovery under both experimental and observational settings.
In POF, there are three basic elements:
\textbf{unit (U)} is the atom research objective, e.g., a user in the recommender system.
\textbf{Treatment (I)} is the action imposed on the unit, e.g., an item recommended to the user.
\textbf{Outcome (Y)} is the result after applying the treatment to the unit, e.g., the user feedback on the item. 
The relations between different variables in POF is described by a \textbf{causal graph}.
For example, the recommendation task can be formulated by the causal graph in Figure~\ref{intro}(a), where the user is a common cause of the item and user feedback, and the user feedback is jointly determined by the user and item.
The variables simultaneously influence the treatment and outcome is called \textbf{confounder}.
In Figure~\ref{intro}(a), the user is a confounder of the item and user feedback.
Specially, if $p(U=u|I=i)=p(U=u|I=i')$ holds for any treatment pair $(i,i')$, then $U$ is independent with $I$, that is, there is no causal relation between $U$ and $I$.
At this time, we say the confounder is balanced on the treatment.
In the following, we indiscriminately use the notations of confounder balancing and user feature balancing.
Since POF is not the focus of this paper, we refer the readers to~\cite{yao2020survey} for more details.

\subsection{Debiased Recommendation}
Debiased recommendation is a recently emerged research topic.
Here, we explain it from the perspective of potential outcome framework.
As mentioned above, the recommendation task can be formulated by the causal graph in Figure~\ref{intro}(a).
Formally, let $u$, $i$ and $y$ denote the user, item and user feedback on the item.
Suppose $\delta$ is a loss function, $f$ is a recommender model, then the ideal learning objective for debiased recommendation is:
\begin{equation}
\label{idea}
\!\!L_{\text{ideal}} \!=\!\mathbb{E}_{u}[\frac{1}{N}\!\!\sum_{i=1}^{N} \!\delta(r_{ui},\!f(u,\!i))] \!=\! \mathbb{E}_{u}[\mathbb{E}_{i\sim p^{\text{uni}}(i)}[\delta(r_{ui},\!f(u,\!i))]],
\end{equation}
where N is the number of items. 
$r_{ui}=\mathbb{E}[y|u,i]$ is the expectation of the user feedback on the item. 
$p^{\text{uni}}(i) = \frac{1}{N}$ is the uniform distribution supported by the whole item set.
We regard the user feedback as a random variable, which can be more practical considering the potential noisy information in recommender systems.

In this objective, the training data is assumed to be generated by a causal graph (see Figure~\ref{intro}(b)), where there is no edge from the user to the item, and the item is uniformly exposed to the users. 
However, this causal graph is usually not aligned with the one (see Figure~\ref{intro}(c)) inducing the offline datasets, thus the ideal learning objective can be intractable.
In order to achieve debiased recommendation based on offline datasets, IPS is a very common strategy~\cite{DBLP:journals/jmlr/SwaminathanJ15, DBLP:conf/aaai/NaritaYY19}, where the offline samples are re-weighted to approach the data distribution induced from the ideal causal graph.
While IPS-based methods have achieved many promising results, the sample weights need to be predicted from the offline dataset, which can be inaccurate and may lead to the high variance issue~\cite{DBLP:journals/corr/SwaminathanJ15, DBLP:conf/nips/SwaminathanJ15}.
In the following sections, we propose an alternative debiased recommendation strategy based on confounder balancing\footnote{Hereafter, we do not distinguish confounder balancing with user feature balancing.}, which avoids the drawbacks of IPS.

\section{Debiased Recommendation with Confounder Balancing}
In this section, we detail our idea.
To begin with, we derive an upper bound of the ideal learning objective~(\ref{idea}) by adjusting the user distributions.
In the derived upper bound, the confounder (i.e., user) needs to be balanced for each item pair.
Since the number of items can be quite large in practice, we design three strategies including clipping, sampling and adversarial learning to facilitate the model training process.
For more robust optimization, we, at last, propose to capture the potential latent confounders in recommender systems, and combine them into the final learning objective.

\subsection{Upper Bound of the Ideal Objective}
In IPS-based methods, the ideal learning objective is achieved by re-weighting the offline training samples.
In order to avoid the weaknesses of IPS, we take a different strategy.
In specific, we hope to derive a surrogate learning objective, satisfying two properties:
on one hand, it should be strictly not smaller than~(\ref{idea}), so that minimizing the surrogate can approximately lower the ideal learning objective. 
On the other hand, the derived objective should be solvable only with the offline dataset.
To proceed, we introduce a representative function $\phi:\mathcal{U}\rightarrow\mathbb{R}^d$, which projects the user information (defined on $\mathcal{U}$) into a $d$-dimensional space. 
Then we have the following theory:
\begin{mytheory}
\label{theory1}
For measuring distribution distance, suppose $\text{IPM}_G$ is the integral probability metric (IPM) defined in~\cite{ipmmetrics}, that is: 
\begin{equation}
    IPM_G(p, q) := \sup_{g\in G}\int_S| g(s) (p(s)-q(s))ds|,
\end{equation}
where $G$ is a function class, $p$ and $q$ are two density functions on $S$.

Let $l_{f,\phi}(u, i)=\int_y\delta(y, {f}(\phi(u),i))p(y|u,i)\text{d}y$ be the local loss for the user-item pair $(u,i)$.
We define $s_i=\int_u l_{f,\phi}(u, i)p(u|i)\text{d}u$, where $p(u|i)$ is the probability of observing $u$ given $i$.
Then for any loss function $\delta$, we have:
\begin{equation}
\begin{aligned}
\label{up-bound}
&\mathbb{E}_{u} [\mathbb{E}_{i\sim p^{\text{uni}}(i)}[\delta(r_{ui},f(\phi(u),i))]]\\
\leq &\frac{1}{N}\sum_{i=1}^N s_i+ \frac{1}{N}\!\!\!\!\sum_{{i,i'\in[1,N],\atop i\not=i'}}\!\!\!\!\{p(i)\!+\!p(i')\}B_\phi \text{IPM}_G(p_\phi^i,p_\phi^{i'})\!+\!C,
\end{aligned}
\end{equation}
where 
N is the number of items.
$p(i)$ is the marginal probability of observing item $i$.
$B_\phi$ is a constant, such that $\frac{l_{f,\phi}(u, i)}{B_\phi} \in G,~~\forall i\in [1, N]$.
$p_\phi^i$ is the user distributions by projecting $p(u|i)$ with $\phi$, that is, $p_\phi^i(\phi(u)) \overset{def}{=} p_\phi(\phi(u)|i)$.
C is a constant.
\end{mytheory}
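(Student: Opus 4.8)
The plan is to reach~(\ref{up-bound}) in three moves: first I would absorb the randomness of $y$ into an additive constant via a Jensen-type inequality; then I would rewrite the expectation over the user \emph{marginal} as a per-item sum and compare it term by term against $\frac1N\sum_i s_i$; and finally I would control each residual term by $\text{IPM}_G$ using the normalization hypothesis on $l_{f,\phi}$. Throughout I will use that, since $l_{f,\phi}(u,i)$ depends on $u$ only through $\phi(u)$, it can be written as $\tilde l_f(\phi(u),i)$ for a function $\tilde l_f$ on $\mathbb{R}^d$, and that $p_\phi^i$ is the pushforward of $p(u\mid i)$ under $\phi$.

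\textbf{Step 1 (remove the $y$-randomness).} Note that $r_{ui}=\mathbb{E}[y\mid u,i]$ whereas $l_{f,\phi}(u,i)=\mathbb{E}_{y\sim p(y\mid u,i)}[\delta(y,f(\phi(u),i))]$, so $l_{f,\phi}(u,i)-\delta(r_{ui},f(\phi(u),i))$ is exactly the Jensen gap of $\delta(\cdot,f(\phi(u),i))$ under $p(y\mid u,i)$. When $\delta$ is convex in its first argument this gap is $\ge 0$, so $\delta(r_{ui},f(\phi(u),i))\le l_{f,\phi}(u,i)$ with no correction; for an arbitrary loss I would instead invoke the Jensen-gap estimates of~\cite{gao2017bounds} to bound the gap in absolute value by a finite constant depending only on bounds on the conditional variance of $y$ and on the curvature of $\delta$, hence not on $f$ or $\phi$. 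Taking $\mathbb{E}_u\mathbb{E}_{i\sim p^{\text{uni}}}$ of the resulting pointwise inequality then gives
\[
\mathbb{E}_u[\mathbb{E}_{i\sim p^{\text{uni}}}[\delta(r_{ui},f(\phi(u),i))]]\;\le\;\frac1N\sum_{i=1}^N\int_u l_{f,\phi}(u,i)\,p(u)\,\text{d}u\;+\;C,
\]
where $C$ collects the (integrated) Jensen-gap bound and $p(u)$ is the user marginal appearing in~(\ref{idea}).

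\textbf{Steps 2--3 (marginal-to-conditional split and IPM bound).} Pushing the $u$-integral forward to $z=\phi(u)$ and using the law of total probability over items, $\int_u l_{f,\phi}(u,i)\,p(u)\,\text{d}u=\int\tilde l_f(z,i)\,p_\phi(z)\,\text{d}z$ with $p_\phi=\sum_{j=1}^N p(j)\,p_\phi^{j}$. Since $\sum_j p(j)=1$ and $s_i=\int\tilde l_f(z,i)\,p_\phi^{i}(z)\,\text{d}z$, isolating $j=i$ yields $\int\tilde l_f(z,i)\,p_\phi(z)\,\text{d}z=s_i+\sum_{j\ne i}p(j)\int\tilde l_f(z,i)\big(p_\phi^{j}(z)-p_\phi^{i}(z)\big)\text{d}z$. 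Averaging over $i$ recovers $\frac1N\sum_i s_i$ plus a double sum of distribution-mismatch integrals. For each ordered pair $i\ne j$, the hypothesis $\tilde l_f(\cdot,i)/B_\phi\in G$ and the definition of $\text{IPM}_G$ give $\left|\int\tilde l_f(z,i)\big(p_\phi^{j}(z)-p_\phi^{i}(z)\big)\text{d}z\right|\le B_\phi\,\text{IPM}_G(p_\phi^{i},p_\phi^{j})$. Bounding each mismatch term by this (using $p(j)\ge 0$) and then exploiting nonnegativity and symmetry of $\text{IPM}_G$ to write $\sum_i\sum_{j\ne i}p(j)\,\text{IPM}_G(p_\phi^i,p_\phi^j)=\tfrac12\sum_{i\ne i'}\{p(i)+p(i')\}\,\text{IPM}_G(p_\phi^i,p_\phi^{i'})\le\sum_{i\ne i'}\{p(i)+p(i')\}\,\text{IPM}_G(p_\phi^i,p_\phi^{i'})$ produces precisely the middle term of~(\ref{up-bound}); adding the three contributions closes the proof.

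The only genuinely delicate point I anticipate is Step~1 for a non-convex loss: one has to verify that the Jensen gap $\mathbb{E}_{y\sim p(y\mid u,i)}[\delta(y,v)]-\delta(\mathbb{E}[y\mid u,i],v)$ is uniformly bounded over $(u,i)$ and over the admissible range of $v=f(\phi(u),i)$, so that it genuinely contributes a model-independent constant $C$ rather than a term that scales with the optimization variables; this is exactly where the regularity assumptions behind~\cite{gao2017bounds} are needed (for the standard cross-entropy loss the loss is affine in $y$, so the gap vanishes and $C$ can be taken to be the Jensen/variance remainder, e.g.\ zero). Steps~2 and~3, by contrast, are a routine total-probability decomposition followed by a single application of the IPM definition, in the spirit of the counterfactual-regression bound of~\cite{shalit2017estimating}.
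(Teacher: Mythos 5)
Your argument is correct and follows essentially the same route as the paper's proof: the paper likewise combines (i) the marginal-to-conditional decomposition $p(u)=\sum_j p(j)\,p(u\mid j)$, (ii) the Jensen-gap bound of~\cite{gao2017bounds} to trade $\delta(r_{ui},\cdot)$ for the $y$-averaged loss $l_{f,\phi}$ at the cost of an additive constant, and (iii) a single application of the IPM definition to the distribution-mismatch terms followed by symmetrization, differing only in that it performs the item decomposition \emph{before} the Jensen step (bounding $|s_i-k_1(i)|$ and $|t_{ii'}-k_2(i,i')|$ separately) rather than after. The uniformity caveat you flag --- that the Jensen-gap constant must not depend on $f$ or $\phi$ for $C$ to be a genuine constant --- is a real delicacy that the paper's own constants $M_i(\rho_{\beta i}+\rho_{\gamma i})$ leave equally unaddressed, so your treatment is no weaker than the original.
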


\begin{proof}
To begin with, we decompose the ideal objective~(\ref{idea}) as follows:
{\setlength\abovedisplayskip{5pt}
\setlength\belowdisplayskip{5pt}
\begin{equation}
\begin{split}\label{app-idea}
    L_{\text{ideal}} =& \mathbb{E}_{u} [\mathbb{E}_{i\sim p^{\text{uni}}(i)}[\delta(r_{ui},f(\phi(u),i))]]\\
    =&\int_{u} \frac{1}{N}\sum_{{i}} \delta(r_{ui}, {f}(\phi(u),i))p(u) \text{d}u \\
    =&\frac{1}{N}\sum_{{i}} \int_{u, j} \delta(r_{ui}, {f}(\phi(u),i))p(u, j)  \text{d}u  \text{d}j
\end{split}
\end{equation}}
where $p(u, j)$ is the probability of user-item pair $(u,j)$ in the observed dataset.
The last equation holds due to the properties on marginal distribution.
$\delta(\cdot, \cdot)$ can be arbitrary loss functions such as RSME, binary cross entropy, among others. 

To bound equation~(\ref{app-idea}), we introduce the following two notations: 
{\setlength\abovedisplayskip{5pt}
\setlength\belowdisplayskip{5pt}
\begin{equation}
\begin{aligned}
    &k_1(i) = \int_{u} \delta(r_{ui}, {f}(u, i))p(u|i) du \\
    &k_2(i, i') = \int_{u}\delta(r_{ui}, {f}(u, i))p(u|i') du \quad i\neq i'
\end{aligned}
\end{equation}}
Then the ideal learning objective can be write as:
{\setlength\abovedisplayskip{5pt}
\setlength\belowdisplayskip{5pt}
\begin{equation}
\begin{aligned}
\label{expected_objective}
L_{\text{ideal}} =\frac{1}{N}\sum_{{i}} [p(i) k_1(i) + \sum_{{i'\not=i}} p(i')k_2(i,i')]
\end{aligned}
\end{equation}}

By the above definition, we relate the idea learning objective with the observed data distribution.
Recall that $r_{ui}=\mathbb{E}[y|u,i]$, which is intractable only from empirical samples.
Thus, we further define the following notations:
{\setlength\abovedisplayskip{5pt}
\setlength\belowdisplayskip{5pt}
\begin{equation}
\begin{aligned}
\label{expected}
\epsilon_F &= \int \delta(y_{ui}, {f}(u,i)) p(y_{ui}|u,i)p(u, i) du di dr_i \\ 
&=\sum_{{i}} p(i) \underbrace{\int_u[\int_{y_{ui}}\delta(y_{ui}, {f}(u,i)p(y_{ui}|u,i)dy_{ui})] p(u|i)du}_{s_i}=\sum_{{i}} p(i)s_i\\
\epsilon_{CF}^i &= \sum_{i'\not=i}\int \delta(y_{ui}, {f}(u, i)) p(y_{ui}|u,i)p(u, i') du dy_{ui}\\
&=\sum_{i'\not=i} p(i') \underbrace{\int_u[\int_{y_{ui}}\delta(y_{ui}, {f}(u,i)p(y_{ui}|u,i)dy_{ui})] p(u|i')du}_{t_{ii'}}=\sum_{i'\not=i} p(i')t_{ii'}
\end{aligned}
\end{equation}}
where $y_{ui}$ is the observed feedback of user $u$ on item $i$\footnote{It should noted that, in the main paper, we use $u_t$, $i_t$ and $y_t$ to represent the same concepts.}.

Based on the above definition, we formally derive the upper bound of $L_{\text{ideal}}$ following three key steps:
(1) the first step is to relate $k_1(i)$ and $k_2(i, i')$ with $s_i$ and $t_{ii'}$ by bounding $|s_i-k_1(i)|$ and $|t_{ii'}-k_2(i, i')|$, respectively. Then $L_{\text{ideal}}$ can be represented by $s_i$ and $t_{ii'}$.
(2) Since $t_{ii'}$ contains unobserved user-item pairs which are intractable, the second step is to connect $s_i$ and $\epsilon_{CF}^i$, so that $L_{\text{ideal}}$ can be only depend on $s_i$.
(3) Based on (1) and (2), we derive the upper bound of $L_{\text{ideal}}$ to end the proof.

\textbf{(1) Relating $k_1(i)$ and $k_2(i, i')$ with $s_i$ and $t_{ii'}$:}
Let $P_i = |s_i-k_1(i)|$ and $Q_i = |t_{ii'}-k_2(i, i')|$, then we have following Lemma:
\begin{lemma}
\label{lem:p_q}
If the loss function $\delta$ is differentiable, for any item $i$. There exist constants $M_i>0, \beta>0, \gamma\ge\beta$, such that:
{\setlength\abovedisplayskip{5pt}
\setlength\belowdisplayskip{5pt}
\begin{equation}
\begin{aligned}
P_i &=|s_i-k_1(i)|\\
&=|\int_u E_{y_{ui}}[\delta(y_{ui}, {f}(u,i))]-\delta(r_{ui}, {f}(u,i)) p(u|i) du|\\
&\le \int_u |E_{y_{ui}}[\delta(y_{ui}, {f}(u,i))]-\delta(r_{ui}, {f}(u,i))| p(u|i) du\\
&\le \int_u M_i(\rho_{\beta i}+\rho_{\gamma i})p(u|i) du\\
&= M_i(\rho_{\beta i}+\rho_{\gamma i})\\
Q_i &=|t_{ii'}-k_2(i, i')|\\
&=|\int_u E_{y_{ui}}[\delta(y_{ui}, {f}(u,i))]-\delta(r_{ui}, {f}(u,i)) p(u|i') du|\\
&\le \int_u |E_{y_{ui}}[\delta(y_{ui}, {f}(u,i))]-\delta(r_{ui}, {f}(u,i))| p(u|i') du\\
&\le \int_u M_i(\rho_{\beta i}+\rho_{\gamma i})p(u|i') du\\
&= M_i(\rho_{\beta i}+\rho_{\gamma i})
\end{aligned}
\end{equation}}
\end{lemma}
The second inequality in $P_i$ and $Q_i$ holds because of the Jensen bound demonstrated in~\cite{gao2017bounds}.

Based on Lemma~\ref{lem:p_q}, we could get the following upper bound of $L_{\text{ideal}}$:
{\setlength\abovedisplayskip{5pt}
\setlength\belowdisplayskip{5pt}
\begin{equation}
\begin{aligned}
\label{eq:a_f_cf}
L_{\text{ideal}} =&  \frac{1}{N}\sum_{{i}} [p(i) k_1(i) + \sum_{{i'\not=i}} p(i')k_2(i,i')]\\
\le& \frac{1}{N}\sum_{{i}} [p(i) [s_i+M_i(\rho_{\beta i}+\rho_{\gamma i})] + \sum_{{i'\not=i}} p(i')[t_{ii'}+M_i(\rho_{\beta i}+\rho_{\gamma i})]]\\
=& \frac{1}{N}\epsilon_{F} + \frac{1}{N}\sum_{{i}}\epsilon_{CF}^i + \frac{1}{N}\underbrace{\sum_{{i}}[p(i)x(i)+\sum_{{i'}}p(i')x(i)]}_{constant} \\
=& \frac{1}{N}\sum_{{i}} p(i)s_i + \frac{1}{N}\sum_{{i}}\epsilon_{CF}^i + C
\end{aligned}
\end{equation}}
where $C$ is a constant.

\textbf{(2) Relating $\epsilon_{CF}^i$ with $s_i$:}
In Eq. \ref{eq:a_f_cf}, $\epsilon_{CF}^i$ cannot be estimated from the observation data, since it contains user-item pairs which are unobserved.
To solve this problem, we connect $\epsilon_{CF}^i$ with $\epsilon_{F}$ by the following lemma:
\begin{lemma}
\label{lem:f_cf}
Let $\psi$ denotes the inverse function of $\phi$, $t$ denotes representation of $u$ (i.e. $t = \phi(u)$), then:
\begin{equation}
\begin{aligned}
\label{eq:f_cf}
&\epsilon_{CF}^i - \sum_{i'\not=i}p(i')s_i\\
=&\sum_{i'\not=i}p(i')\int \mathbb{E}_{y_{ui}}[l(y_{ui}, f(u, i))] (p(u|i)-p(u|i'))dx\\
=&\sum_{i\not=i}p(i')\int \delta(\psi(t),i)(p_\phi^i(t)-p_\phi^{i'}(t)) dt\\
\le& \sum_{i'\not=i}p(i') B_\phi IPM_G(p_\phi^i(t),p_\phi^{i'}(t))
\end{aligned}
\end{equation}
\end{lemma}
In Lemma~\ref{lem:f_cf}, we use $s_i$ to bound the crucial intractable term $\epsilon_{CF}^i$. 
The $p_\phi^i(t),p_\phi^{i'}(t)$ in bottom line are the converted user feature representation for item $i$ and $i'$, respectively.
The distance between $\epsilon_{CF}^i$ and $\sum_{i'\not=i}p(i')s_i$ will be minimized when $p_\phi^i(t)$ is equal to $p_\phi^{i'}(t)$ for any $i'\neq i$.

\textbf{(3) Deriving the final upper bound of $L_{\text{ideal}}$:}
By bringing equation~(\ref{eq:f_cf}) into (\ref{eq:a_f_cf}), we have:
\begin{equation}
\begin{aligned}
L_{\text{ideal}} &= \frac{1}{N}\sum_{i=1}^{N} p(i) k_1(i) + \frac{1}{N}\sum_{i=1}^{N}\sum_{{i'\not=i}} p(i')k_2(i,i')\\
&\leq \frac{1}{N}\sum_{i=1}^{N} p(i) s_i + \frac{1}{N}\sum_{i=1}^{N}\sum_{{i'\not=i}} p(i')t_{ii'} + C\\
&\leq \frac{1}{N}\sum_{i=1}^{N} p(i) s_i + \frac{1}{N}\sum_{i=1}^{N}\sum_{{i'\not=i}} p(i')[s_i+B_\phi \text{IPM}_G(p_\phi^i,p_\phi^{i'})] + C\\
&\leq \frac{1}{N}\sum_{i=1}^{N} s_i + \frac{1}{N}\sum_{i=1}^{N}\sum_{{i'\not=i}} p(i')B_\phi \text{IPM}_G(p_\phi^i,p_\phi^{i'}) + C\\
&= \frac{1}{N}\sum_{i=1}^{N} s_i + \frac{1}{N}\sum_{{i,i'\in[1,N],\atop i\not=i'}}\!\!\!\! \{p(i')+p(i)\}B_\phi \text{IPM}_G(p_\phi^i,p_\phi^{i'}) + C\\
\end{aligned}
\end{equation}
where $C$ absorbs all the constants, and the last equation holds because $\text{IPM}_G(p_\phi^i,p_\phi^{i'})$ is symmetrical for $i$ and $i'$.
\end{proof}

On this theory, we have the following remarks:

$\bullet$ This objective mainly includes two parts: the first one aims to learn the recommender model based on the projected user information $\phi(u)$, and the second one targets at minimizing the distance between the projected user distributions given each pair of items.
Since $s_i$ is defined on the observational data, this upper bound can be directly optimized with offline datasets.

$\bullet$ The representative function $\phi$ has two effects: 
(\romannumeral1) ``adjusting'' all the irregular user feature distributions to a target one,
and (\romannumeral2) ``learning'' the recommender model based on the target distribution.
Such ``adjusting-learning'' effects actually aim to solve the key problem of distribution shift in debiased recommendation~\cite{DBLP:conf/kdd/ZouKCC019}.
For a shifted user feature distribution in the testing set, the learned $\phi$ can effectively project it to the target distribution, which has been well handled in the training phrase.

$\bullet$ From the causal graph perspective, the effect of the second part in the upper bound is to cut down the causal relation from the user to the item.
The optimal solution of minimizing $\text{IPM}_G(p_\phi^i,p_\phi^{i'})$ for each item pair is $p_{\phi}(\phi(u)|i=1)=p_{\phi}(\phi(u)|i=2)=...=p_{\phi}(\phi(u)|i=N)=p_{\phi} (\phi(u))$, which implies that $\phi(u)$ is independent of $i$, that is, there is no ``user $\rightarrow$ item'' edge in the causal graph.

$\bullet$ Notably, many previous work on confounder balancing leverage similar techniques to derive solvable learning objectives~\cite{DBLP:journals/corr/abs-2001-07426}.
However, our result is more general in two aspects: 
(\romannumeral1) previous studies are highly tailored for MSE, but our learning objective is compatible with any loss function, and (\romannumeral2) the treatment is no longer restricted to be binary, its size can be much larger.

\textbf{Empirical loss of the upper bound.}
Suppose the offline recommender dataset is $\{(u_t,i_t,y_t)\}_{t=1}^T$, where each sample $(u_t,i_t,y_t)$ includes a user, an item and the user feedback on the item, and there are in total $T$ samples. 
Then the empirical loss corresponding to the upper bound in equation~(\ref{up-bound}) is:
\begin{equation}
\begin{aligned}
\label{emp}
&L_{\text{emp}}(\theta_f,\theta_{\phi}) \\
= & \frac{1}{T}\sum_{t=1}^T\frac{1}{p(i_t)}\delta(y_t,f(\phi(u_t),i_t)) + \lambda_{f}||\theta_f||_2^2 + \lambda_{\phi}||\theta_{\phi}||_2^2+\\
&\gamma\!\!\!\!\!\sum_{{i,i'\in[1,N],\atop i\not=i'}}\!\!\!\!\{p(i)\!+\!p(i')\}\text{IPM}_G(\{\phi(u_t)\}_{t:i_t=i},\{\phi(u_t)\}_{t:i_t=i'})\!\\
\end{aligned}
\end{equation}
where $\theta_f$ and $\theta_{\phi}$ are the parameters of the recommender model $f$ and representative function $\phi$.
$p(i)$ is approximated as $\frac{T_i}{T}$, and $T_i$ is the frequency of item $i$ in the dataset.
Thus, the weight $\frac{1}{p(i_t)}$ compensates for different item observation frequencies.
Similar to~\cite{shalit2017estimating}, $\text{IPM}_G(\cdot; \cdot)$ is the empirical integral probability metric w.r.t. $G$, and $B_\phi$ is regarded as a part of the hyper-parameter $\gamma$.

\subsection{Efficient Confounder Balancing}\label{ECB}
To achieve confounder balancing, the user distribution distance is minimized for each item pair in objective~(\ref{emp}).
Suppose there are N items in the system, then we need to optimize $C_{N}^2$ balancing terms (i.e., $\text{IPM}_G(\cdot; \cdot)$), which can be quite inefficient or even infeasible in practical recommender systems.
To alleviate this problem, we propose the following three strategies to facilitate model training:

\textbf{Clipping.}
This method directly drops the balancing terms which are not important.
In specific, let $p_{ii'} = p(i')+p(i)$ be the importance score of item pair $(i,i')$.
Then we define $S$ as the set of item pairs, which has the largest $p_{ii'}$'s, and we suppose $|S|=K_1$.
At last, we change the last part of (\ref{emp}) to:
\begin{equation}
\begin{aligned}
\label{emp-clip}
\gamma\sum_{{i,i'\in S}}p_{ii'}\text{IPM}_G(\{\phi(u_t)\}_{t:i_t=i},\{\phi(u_t)\}_{t:i_t=i'}),
\end{aligned}
\end{equation}
where we directly remove the balancing terms with smaller importance scores.

\textbf{Sampling.}
In this method, we randomly sample $K_2$ balancing terms for optimization in each training epoch.
The balancing term for $(i,i')$ is sampled according to the probability $\frac{p_{ii'}}{\sum_{j,j'\in[1,N],j\not=j'} p_{jj'}}$.
By this method, the number of optimized balancing terms is reduced from $C_{N}^2$ to $K_2$, which improves the training efficiency. 
Since the balancing terms are selected in a random manner, the item pairs with smaller $p_{ii'}$'s still have chances to be optimized, which improves the model robustness as compared with the clipping strategy.

\textbf{Adversarial learning.}
While the above two methods can enhance the training efficiency, they have to abandon some balancing terms either deterministically or stochastically.
In order to fully satisfy the balancing requirements, we propose an adversarial training strategy.
Our general idea is to introduce a discriminator $D$, which serves as a classifier to identify which distribution (of $p(u|i=t), t\in[1, N]$) an instance is sampled from.
Then the representative function $\phi$ is optimized to make the instance $\phi(u)$ unidentifiable.
Formally, we have the following learning objective:
{\setlength\abovedisplayskip{6pt}
\setlength\belowdisplayskip{6pt}
\begin{equation}
\begin{aligned}
\label{emp-adv}
&L^{\text{adv}}_{\text{emp}}(\theta_f,\theta_{\phi},\theta_{D}) \\
= & \frac{1}{T}\sum_{t=1}^T\frac{1}{p(i_t)}\delta(y_t,f(\phi(u_t),i_t)) + \gamma\sum_{i=1}^N \mathbb{E}_{u\sim p(u|i)}[\log D^i(\phi(u))]\\
+ &\sum_{k\in\{f,\phi,D\}}\lambda_{k}||\theta_k||_2^2,\\
\end{aligned}
\end{equation}}
where $\theta_{D}$ is the parameter of $D$.
The output layer of $D$ is softmax for classification, and $D^i$ denotes the $i$th element. 
Thus, $\sum_{i=1}^N D^i(\phi(u)) = 1$.
$\mathbb{E}_{u\sim p(u|i)}$ can be empirically estimated by $\sum_{u\in I_u}$, and $I_u$ is the item set interacted by user $u$.
We optimize this objective based on the following constraint minimax game:
{\setlength\abovedisplayskip{6pt}
\setlength\belowdisplayskip{6pt}
\begin{equation}
\begin{aligned}
\label{emp-adv1}
\min_{\theta_{f}, \theta_{\phi}}\max_{\theta_{D}} &~L^{\text{adv}}_{\text{emp}}(\theta_f,\theta_{\phi},\theta_{D}) \\
s.t. &\sum_{i=1}^N D^i(\phi(u)) = 1,
\end{aligned}
\end{equation}}
where $\theta_{D}$ is learned to maximize the likelihood of the right classification results.
$\theta_{f}$ is optimized to fit the recommendation data.
$\theta_{\phi}$ is learned to jointly hide the identification of $\phi(u)$ and achieve better recommendation results.
To see the effect of the above adversarial learning strategy, we have the following theory:
\begin{mytheory}
\label{theory2}
Let $p_\phi^i$ denotes the user distributions by projecting $p(u|i)$ with $\phi$, then the minimax game defined by
{\setlength\abovedisplayskip{3pt}
\setlength\belowdisplayskip{3pt}
\begin{equation}
\begin{aligned}
\label{emp-adv2}
\min_{\theta_{\phi}}\max_{\theta_{D}}& \sum_{i=1}^N \mathbb{E}_{u\sim p(u|i)}[\log D^i(\phi(u))] \\
s.t. &\sum_{i=1}^N D^i(\phi(u)) = 1
\end{aligned}
\end{equation}}
has a global optimal solution when $p_\phi^1 = p_\phi^2=...=p_\phi^N$.
\end{mytheory}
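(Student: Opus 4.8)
The plan is to follow the classical two-stage analysis of generative adversarial minimax games, adapted here to an $N$-class discriminator. \textbf{Step 1 (inner maximization).} I would first fix the representative function $\phi$, which freezes the projected distributions $p_\phi^1,\dots,p_\phi^N$, and solve $\max_{\theta_D}$ exactly. Writing $t=\phi(u)$ and pushing the measures forward, the inner objective equals $\sum_{i=1}^N \int p_\phi^i(t)\log D^i(t)\,\mathrm{d}t$, which decouples pointwise: for each fixed $t$ we maximize $\sum_{i=1}^N p_\phi^i(t)\log D^i(t)$ subject to $\sum_{i=1}^N D^i(t)=1$. A Lagrange-multiplier argument (equivalently, Gibbs' inequality) yields the unique maximizer $D^i_\ast(t)=\frac{p_\phi^i(t)}{\sum_{j=1}^N p_\phi^j(t)}$, assuming the discriminator family is expressive enough to realize this function.

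\textbf{Step 2 (reduced objective).} Substituting $D_\ast$ back gives the $\phi$-only objective $C(\phi)=\sum_{i=1}^N\int p_\phi^i(t)\log\frac{p_\phi^i(t)}{\sum_{j}p_\phi^j(t)}\,\mathrm{d}t$. Introducing the mixture $\bar p_\phi(t)=\frac{1}{N}\sum_{j=1}^N p_\phi^j(t)$ and peeling off the $\log N$ term, this rewrites as $C(\phi)=\sum_{i=1}^N \mathrm{KL}\!\left(p_\phi^i\,\|\,\bar p_\phi\right)-N\log N$, i.e. $N$ times the uniformly weighted generalized Jensen--Shannon divergence of $p_\phi^1,\dots,p_\phi^N$, shifted by the constant $-N\log N$.

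\textbf{Step 3 (outer minimization).} I would then invoke non-negativity of KL divergence: $\sum_{i=1}^N \mathrm{KL}(p_\phi^i\,\|\,\bar p_\phi)\ge 0$, with equality if and only if $p_\phi^i=\bar p_\phi$ for every $i$, which holds precisely when $p_\phi^1=p_\phi^2=\cdots=p_\phi^N$. Hence $C(\phi)\ge -N\log N$, and the global minimum $-N\log N$ is attained exactly at the balanced configuration; at that point $D^i_\ast(t)\equiv 1/N$, so the pair $(\phi_\ast,D_\ast)$ with all $p_\phi^i$ equal is a global solution of the minimax game, as claimed.

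The main obstacle is not the algebra but the \emph{capacity caveat}: the argument presumes the inner maximization ranges over a family of discriminators rich enough for $D_\ast$ to be attainable, and that $\phi$ can in fact be chosen to equalize the projected distributions. I would state these nonparametric/expressiveness assumptions explicitly, note the regularity needed for the pointwise optimization to be valid (absolute continuity of the $p_\phi^i$ and finiteness of the relevant integrals so the KL terms are well defined), and remark that under a genuinely parametric restriction the statement should be read as characterizing the optimum of the population objective over the \emph{induced} distributions rather than over raw parameters.
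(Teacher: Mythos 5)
Your proposal is correct and follows essentially the same route as the paper's proof: fix $\phi$, solve the inner maximization pointwise via a Lagrange multiplier to obtain $D^i_\ast(t)=p_\phi^i(t)/\sum_{j}p_\phi^j(t)$, substitute back to recognize the reduced objective as a shifted multiple of the generalized Jensen--Shannon divergence, and conclude that the outer minimum is attained when all $p_\phi^i$ coincide. In fact your Step~3 is stated more carefully than the paper's: the paper asserts that the JSD expression is \emph{maximized} when the distributions are equal, whereas the correct direction --- which your non-negativity-of-KL argument supplies --- is that it is \emph{minimized} (equal to zero) there, which is what the outer $\min_{\theta_\phi}$ actually requires.
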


\begin{proof}
Suppose $t = \phi(u)$, then objective~(\ref{emp-adv2}) can be written as:
\begin{equation}
\begin{aligned}
\max_{\theta_{D}}&\sum_{i=1}^N \int_t p_\phi^i(t) \log D^i(t) dt \\
s.t. &\sum_{i=1}^N D^i(t) = 1
\end{aligned}
\end{equation}
We use Lagrange multiplier to solve the above constraint optimization problem, which induces the following objective:
\begin{equation}
\begin{aligned}
L(D, \lambda)=\max_{\theta_{D}}& \sum_{i=1}^N \int_s p_\phi^i(t) \log D^i(t)dt +\lambda (\sum_{i=1}^N D^i(t)-1)\\
\end{aligned}
\end{equation}

Let $\frac{\partial L(D, \lambda)}{\partial D}=0$,$\frac{\partial L(D, \lambda)}{\partial \lambda}=0$, we have:
\begin{equation}\label{eq:partical}
\begin{aligned}
    D^i(t) = -\frac{p_\phi^i(t)}{\lambda}\\
    \lambda = -\sum_k^N p_\phi^k(t)
\end{aligned}
\end{equation}
We then substitute Eq. \ref{eq:partical} into $\sum_{i=1}^N \int_t p_\phi^i(t) \log D^i(t)$, and obtain:
\begin{equation}
\begin{aligned}
    B = \sum_{i=1}^N \int_s p_\phi^i(s) \log \frac{p_\phi^i(t)}{\sum_k^N p_\phi^k(t)}dt
\end{aligned}
\end{equation}

The following proof is based on Jensen-Shannon Divergence (JSD), which is defined as follows:
\begin{definition}(JSD~\cite{fuglede2004jensen}) Denote $P_i$ a random distribution where $i\in\{1,2,\cdots,N\}$. Let $M:=\sum_{i=1}^{N} \frac{1}{N} P_{i}$, the general multivariate ensen-Shannon Divergence (JSD) is defined as
$$
\begin{aligned}
\mathrm{JSD}\left(P_{1}, P_{2}, \ldots, P_{n}\right) &=\sum_{i} \frac{1}{N} KL\left(P_{i} \| M\right) \\
&=H\left(\sum_{i=1}^{n} \frac{1}{N} P_{i}\right)-\sum_{i=1}^{n} \frac{1}{N} H\left(P_{i}\right)
\end{aligned}
$$
where KL is the Kullback–Leibler divergence, and H is the Shannon entropy. According to~\cite{fuglede2004jensen}, JSD is maximized when $P_{1}=P_{2}=...=P_{N}$.
\end{definition}

Finally, we have:
\begin{equation}
\begin{aligned}
    B &= \sum_{i=1}^N \int_s p_\phi^i(s) \log \frac{p_\phi^i(t)}{\sum_k^N p_\phi^k(t)}dt\\
    &=\sum_{i=1}^N E_{p_\phi^i(s)}[\log \frac{p_\phi^i(t)}{\sum_k^N p_\phi^k(t)}]+\log N\\
    &=\sum_{i=1}^N D_{KL}(p_\phi^i(t)||\frac{1}{N}\sum_k^N p_\phi^k(t))+\log N\\
    &=N \text{JSD}(p_\phi^1(t), p_\phi^2(t), \cdots, p_\phi^N(t))
\end{aligned}
\end{equation}
The above equation is maximized when all $p_\phi^i(t)$'s are equal, that is:
\begin{equation}
\begin{aligned}
    p_\phi^1(t) = p_\phi^2(t) = \cdots = p_\phi^N(t)
\end{aligned}
\end{equation}
Thus, the optimal solution for the adversarial objective in equation~(\ref{emp-adv}) is the same as that of the IPM balancing terms in equation~(\ref{emp}). 
\end{proof}
From this theory, we can see, the optimal solution is exactly what we expect from the pair-wise balancing terms in objective~(\ref{emp}).
However, by this adversarial learning strategy, the number of optimization terms is reduced from $\mathcal{O}(N^2)$ to $\mathcal{O}(N)$.

\subsection{Latent Confounder Modeling}
Real-world recommender systems are usually very complex, there can be many unobserved factors (e.g., item promotion, user emotions) that simultaneously influence the item exposure and user feedback. Such factors are called latent confounders in potential outcome framework (see Figure~\ref{intro}(f)).
The existence of latent confounders may: 
(\romannumeral1) lead to lowered recommendation performance due to the lack of sufficient context modeling,
and (\romannumeral2) introduce uncontrollable cause to influence the item exposure probability, which cannot be removed by just balancing the user distributions, and thus 
influences the debiasing effect.
In order to capture latent confounders, we follow the previous work~\cite{wang2019blessings,ranganath2018multiple,guo2020learning} to use a neural network $c$ to infer them.
More specifically, it is intuitive that latent confounders should be related with user personalities, for example, the confounders on user emotions.
In addition, according to the causal graph in Figure~\ref{intro}(f), the latent confounders are also related with the item.
As a result, we estimate the latent confounders by: $z_t = c(i_t, u_t)$.
It should be noted that the latent confounder prediction is an inference process, it does not mean there is a causal edge from the item to the latent confounders, since causal graph describes the data generation process.
Besides, the output $y_t$ is not leveraged as a signal to estimate $z_t$, since it is not observed in advance.
Similar to the previous work~\cite{wang2019blessings,ranganath2018multiple,guo2020learning}, we learn the parameters in $c$ by maximizing the likelihood of observing the exposed items.
The final learning objective by incorporating the latent confounders is:
{\setlength\abovedisplayskip{10pt}
\setlength\belowdisplayskip{10pt}
\begin{equation}
\begin{aligned}
\label{final}
&L^{\text{conf}}_{\text{emp}}(\theta_f,\theta_{\phi},\theta_{D},\theta_{c},\theta_{s}) \\
= & \frac{1}{T}\!\sum_{t=1}^T\!\frac{1}{p(i_t)}\delta(y_t,\!f(\phi(u_t, z_t),i_t)) \!+\!\gamma\!\sum_{i=1}^N \mathbb{E}_{u\sim p(u|i)}[\log D^i(\phi(u, z_t))]\\
- & \frac{1}{T}\sum_{t=1}^T \log p_s(i_t|u_t, z_t) + \!\!\!\sum_{k\in\{f,\phi,D,c,s\}}\!\!\!\lambda_{k}||\theta_k||_2^2 \\
\end{aligned}
\end{equation}}
where $\theta_{c}$ collects all the parameters in $c$. The representative function $\phi$ is applied to both $u_t$ and $z_t$, the third term aims to maximize the likelihood of observing item $i_t$ under the effect of the latent confounders.
$p_{s}$ is a probability function with $\theta_{s}$ as its parameters.

\setlength{\textfloatsep}{0.1cm}
\begin{algorithm}[t] 
\caption{Learning algorithm of our model} 
\label{alg} 
Randomly initialize $\theta_f,\theta_{\phi},\theta_{D},\theta_{c},\theta_{s}$.\\
Indicate the number of training epochs $E$.\\
Indicate the iteration numbers $E_D$ and $E_G$.\\
Indicate the learning rates $\alpha$ and $\beta$.\\
\For{e in [0, $E$]}
{
    $\hat{\theta}_c\leftarrow {\theta_c}$, $\hat{\theta}_{\phi}\leftarrow \theta_{\phi}$\\
    \For{k in [0, $E_D$]}{
        \#~~$L_{dis}(\theta_D, \hat{\theta}_c, \hat{\theta}_{\phi})\!=\!\sum_{i=1}^N \mathbb{E}_{u\sim p(u|i)}[\log D^i(\phi(u, z_t))]$.\\
        $\theta_D\leftarrow \theta_D + \alpha \frac{\partial L_{dis}(\theta_D, \hat{\theta}_c, \hat{\theta}_{\phi})}{\partial \theta_D}$.\\
    }
    $\hat{\theta}_D\leftarrow {\theta_D}$.\\
    \For{l in [0, $E_G$]}{
        \For{k in \{~f,~$\phi$,~c,~s~\}}{
           $\theta_k\leftarrow \theta_k - \beta \frac{\partial L^{\text{conf}}_{\text{emp}}(\theta_f,\theta_{\phi},\hat{\theta}_D,\theta_{c},\theta_{s})}{\partial \theta_k}$.\\
         }
    }
}
\end{algorithm}

\textbf{Model specification.}
In the above sections, we have detailed our idea.
We specify different parts of objective~(\ref{final}) as follows:
$\delta$ is implemented by the cross-entropy loss to model user implicit feedback.
$\phi$ is a simple linear function.
The categorical features for the users are firstly converted to one-hot vectors, and then encoded by embedding matrices.
The continuous features are directly multiplied by weighting matrices to derive the representation.
The discriminator $D$ is a fully connected neural network, that is, $D(\phi(u_t, z_t)) = \sigma(W_1\text{ReLU}(W_2\phi(u_t, z_t)))$,
where $W_1$ and $W_2$ are weighting parameters, $\sigma$ and ReLU are activation functions. 
The model for inferring latent confounders $c$ is a two layer neural network with ReLU as the activation function, that is, $c(u_t;i_t) = V_1\text{ReLU}(V_2\text{ReLU}(V_3[u_t;i_t]))$, where $V_1$, $V_2$ and $V_3$ are weighting parameters.$[\cdot;\cdot]$ is the concatenate operation.
$p_s$ is also a two layer neural network, but the input is $u_t$, $z_t$ and $i_t$, and the output is a scalar between 0 and 1, that is, $p_s(i_t|u_t, z_t) = \sigma(Q_1\text{ReLU}(Q_2\text{ReLU}(Q_3[u_t;i_t;z_t])))$,where $Q_1$, $Q_2$ and $Q_3$ are weighting parameters.

\textbf{Learning algorithm.}
The complete learning algorithm of our framework is summarized in Algorithm~\ref{alg}.
To begin with, the model parameters are randomly initialized.
In each training epoch, the discriminator $\theta_D$ is firstly optimized by maximizing $L_{dis}$ with fixed $\hat{\theta}_{g}$ and $\hat{\theta}_{\phi}$.
Then the recommender model, representative function and likelihood of the treatment are jointly optimized to minimize $L^{\text{conf}}_{\text{emp}}$ by fixing the discriminator $\hat{\theta}_{D}$.

\begin{remark}
In IPS-based methods, the data distribution is adjusted to the ideal one by re-weighting the training samples, which basically smooths the conditional probability corresponding the causal relation ``$u\!\rightarrow\!i$'' (see Figure~\ref{intro}(d)). 
In our user feature balancing method, we project the user distribution to make it independent of the item, which cuts down the causal relation ``$u\!\rightarrow\!i$'' (see Figure~\ref{intro}(e)).
Basically, IPS and our method achieve debiased recommendation with totally different principles.
In the previous work, IPS-based methods have been well studied. 
In this paper, we propose the first solution to debiased recommendation based on confounder balancing, which paves a new way for this research field.
\end{remark}

\section{Related Work}
Our work aims to push the boundary of debiased recommendation, which holds the promise of learning user unbiased preference for adapting complex testing environments.
Traditionally, there are two types of strategies to build debiased recommender models \cite{zhao2021popularity, ding2019reinforced, chen2019counterfactual, zhao2021popularity, wang2021probabilistic, ding2021causal}.
The first one is sample generation methods, where the key idea is to impute user preference on unexposed items, so that the model can be directly trained on the full sample space. 
For example, \cite{direct1} proposes to train an imputation model based on the observed data, and then the recommendation algorithm is learned based on both of the imputated and original samples.
The second strategy is distribution adjusting methods~\cite{saito2020unbiased,DBLP:conf/icml/SchnabelSSCJ16}.
The main idea is to re-weight the training samples for adjusting the offline data distribution to be similar with the ideal unbiased one.
Almost all the models in this category rely on IPS, for example, 
\cite{DBLP:journals/corr/SwaminathanJ15, DBLP:conf/nips/SwaminathanJ15, direct1,chen2021autodebias} leverage different strategies to estimate the inverse propensity scores, which are used to re-weight the user-item interactions.
\cite{DBLP:conf/nips/SwaminathanJ15} proposes to compute the propensity score by self-normalization to reduce the variance.
As mentioned above, IPS-based methods can be flawed when facing with the recommendation domain, since the sample weights need to be estimated from the noisy recommendation data, which can be unreliable and lower the downstream recommendation performance.
There are also many studies on combining the former two methods, aiming to take both of their advantages.
For example, \cite{DBLP:conf/icml/DudikLL11} proposes to simultaneously estimate IPS and learn imputation models to obtain more robust performance. 
\citet{Non-display} introduces uniform data to train the imputation model.
Our idea lies in the second strategy.
However, we do not estimate the sample weights, and thus do not suffer from the weaknesses, such as low estimation accuracy and high variance, brought by IPS-based models.

In another related field of causal inference, there are many studies on confounder balancing.
For example, 
~\cite{johansson2016learning} estimates the average treatment effect (ATE) by balancing the heterogeneous confounders.
~\cite{shalit2017estimating} uses representation learning to balance the confounders to predict the individual treatment effect (ITE).
~\cite{ma2021deconfounding} also aims to learn ATE, but it targets at networked data, where the confounder is balanced by considering the graph structure information.
These work mostly aim to solve the treatment-effect estimation problem, which is quite different from the recommendation task, ranging from the loss function, treatment space to the potential latent confounders. 
Recently, we have noticed that~\citet{wang2021deconfounded} introduces a confounder inference method for building debiased recommender models.
The basic idea of this work comes from Pearl's causal inference framework\cite{pearl2009causality}, which focus on the identifiability problem based on the back-door or front-door criterion. 
In our work, we build debiased recommender models based on potential outcome framework, and we achieve this goal by balancing user feature distributions for different items.

\section{Experiments}
In this section, we conduct extensive experiments to demonstrate the effectiveness of our idea, focusing on the following research questions:

(1) what is the overall performance of our model comparing with the state-of-the-art methods?

(2) What are the effects of different model components?

(3) How the data bias severity influence the recommendation performance?

(4) How the data confounderness influence the recommendation performance?

(5) How the data sparsity influence the recommendation performance?

In the following sections, we firstly detail the experiment setup, and then answer these questions based on the experiment results.
\subsection{Experiment Setup}
\subsubsection{Datasets}
We base our experiments on both synthetic and real-world datasets.
The synthetic dataset is built according to the method leveraged in~\cite{DBLP:conf/kdd/ZouKCC019}.
In specific, we simulate 10000 users and 32 items.
For each user $i$ or item $j$, the features $\bm{p}_i\in \mathbb{R}^{d}$ or $\bm{q}_j\in \mathbb{R}^{d}$ are generated from a multi-variable Gaussian distribution $\mathcal{N}(\bm{0},\bm{I})$, where $d$ and $\bm{I}$ represent the feature dimension and unit matrix, respectively.
When generating the recommendation list for user $i$, we introduce a random variable $z\sim \mathcal{N}(0, 1)$ to model the influence of the latent confounders.
In specific, for each item $j$, we define a score $r_{ij}= 1-\sigma[(1-\alpha)(1-\beta)\bm{a}^T\kappa_1(\kappa_2([\bm{p}_{i}, \bm{q}_{j}])) + \alpha + \beta z+\mathcal{N}(0, 0.02))]$, 
where $\bm{a}\in \mathbb{R}^{2d}$ is specified as an all-one vector.
$\alpha\in[0, 1]$ defines the bias severity, e.g., $\alpha=1$ means the item is recommended in a complete random manner, that is, the recommendation of an item is irrelevant with the user.
$\beta\in [0, 1]$ trades-off the influence of latent confounders, e.g., $\beta=1$ implies that the latent confounders dominate the system.
In the experiment, the default settings of $\alpha$ and $\beta$ are both 0.5.
$\kappa_1(\cdot)$ and $\kappa_2(\cdot)$ as piecewise functions~\cite{DBLP:conf/kdd/ZouKCC019}.
$\kappa_1(x) = x - 0.5$ if $x>0$, otherwise $\kappa_1(x) = 0$.
$\kappa_2(x) = x$ if $x>0$, otherwise $\kappa_2(x) = 0$.
For each user $i$, we recommend item $j$ according to a Bernoulli distribution with the mean of $r_{ij}$.
The length of the recommendation list is set as 5. 
We follow the previous work~\cite{DBLP:conf/kdd/ZouKCC019} to generate the feedback of a user $i$ on an item $j$ as follows:
\begin{equation}
\begin{split}
    &x_{1} =  \bm{a}^T\kappa_1(\kappa_2([\bm{p}_{i}, \bm{q}_{j}]))+{b}\\
    &x_{2} =  \bm{a}^T\kappa_3(\kappa_2([-\bm{p}_{i},-\bm{q}_{j}]))+{b}\\
    &s_{ij} = \mathbb{I}(\sigma(x_{1} + x_{1}\cdot x_{2}) + \beta z - 0.5)\\
\end{split}
\end{equation}
where $\kappa_3(x) = x + 0.5$ if $x<0$, otherwise $\kappa_3(x) = 0$. 
$\mathbb{I}(x)$ is an indicator function, which is 1 if $x>0$, and 0 otherwise.
In the experiment, we follow~\cite{zou2019focused,DBLP:conf/icml/BicaAS20} to build the testing set by uniformly recommending different items to each user.

For real-world experiments, we evaluate our model based on the following datasets:

\textbf{Amazon}\footnote{http://jmcauley.ucsd.edu/data/amazon/} is an e-commerce dataset including user feedback on products from multiple categories.

\textbf{Anime}\footnote{https://www.kaggle.com/CooperUnion/anime-recommendations-database} is crawled from myanimelist.net, which includes the user preference on a large number of animes.

\textbf{Yahoo! R3}\footnote{https://webscope.sandbox.yahoo.com/catalog.php?datatype=r} is an online music recommendation dataset, which contains the user survey data and ratings for randomly selected songs between August 22, 2006 and September 7, 2006.

\textbf{Coat}\footnote{https://www.cs.cornell.edu/~schnabts/mnar/} is a commonly used dataset for evaluating debiased recommender models.

\textbf{PCIC}\footnote{https://competition.huaweicloud.com/information/1000041488/introduction} is a recently released dataset for debiased recommendation, where we are provided with the user preferences on uniformly exposed movies.

Evaluating debiased recommender models should be based on uniform testing sets, which has been provided in {Yahoo! R3}, {Coat} and {PCIC}.
For {Amazon} and {Anime}, we follow the previous work~\cite{saito2020asymmetric,zheng2021disentangling} to resample the original testing sets to simulate uniform data according to the inverse item frequency.
The statistics of all the above datasets are summarized in Table~\ref{rec-dataset}, where we can see they can cover different application domains, and their densities vary a lot. By experimenting on these datasets, we hope to fairly evaluate different models, and demonstrate the generality of our idea.

\begin{table}[t]
\centering
\small
\caption{{Summarization of the datasets.}}
\vspace{-0.3cm}
\scalebox{1}{
\begin{threeparttable} 
\begin{tabular}{p{2.1cm}<{\centering}|p{1.7cm}<{\centering}|p{1.7cm}<{\centering}|p{1.7cm}<{\centering}|p{2.6cm}<{\centering}|p{1.5cm}<{\centering}}
\hline\hline
       Dataset               &\# User   &\# Item  &Density &Domain&Type \\ \hline
Synthetic &10000 &32   &1.756\%&-&Full\\
Yahoo! R3  &14,877&1,000&0.812\%&Music&No\\
Coat       &290   & 300 &5.333\%&Clothing&No\\
PCIC       &1000  & 1720&0.241\%&Movie&No\\
Amazon     &1,323,884     &61,275    &0.003\%&E-commerce&Semi\\
Anime      &73,516     &12,294    &0.86\%&Anime&Semi\\
\hline
\end{tabular}
\begin{tablenotes}    
   \footnotesize              
   \item[1] ``Full'', ``Semi'' and ``No'' indicate fully synthetic, semi-synthetic (only simulating the test set) and real-world datasets.
\end{tablenotes} 
\end{threeparttable}  
}
\label{rec-dataset}
\end{table}

\subsubsection{Baselines.} 
The following representative baselines are selected in our experiments:

\textbf{Direct Method (Direct)~\cite{direct1}} is a sample generation model, where the sample space are imputed before model optimization. In this model, the label of counterfactual data is estimated by a model learned from observational data. 

\textbf{Inverse Propensity Score (IPS)~\cite{DBLP:journals/corr/SwaminathanJ15}} is a distribution adjusting model, where the samples are re-weighted to convert the observed data distribution to the ideal unbiased one. The method is achieved by calculating propensity score over observational data firstly. 

\textbf{Self-nomorlized IPS (SNIPS)~\cite{DBLP:conf/nips/SwaminathanJ15}} is an extension of IPS, where the learning objective variance is reduced by a normalization strategy. 

\textbf{Doubly Robust (DR)~\cite{DBLP:conf/icml/DudikLL11}} is a combination between IPS and Direct Method. 

\textbf{ATT}~\cite{DBLP:conf/sigir/Saito20} is an extension of the direct method based on meta-learning, which leverages two imputation models to double check the correctness of counterfactual labels. 

\textbf{CVIB}~\cite{DBLP:conf/nips/WangCWHKZ20} is a debiased method based on information bottleneck.

In the experiments, we denote our model by \textbf{CBR}, which is short for \underline{\textbf{c}}onfounder \underline{\textbf{b}}alancing based \underline{\textbf{r}}ecommender method.

\begin{table*}[!t]
\caption{\small{Overall comparison between our framework and the baselines.}}
\vspace{-0.3cm}
\small
\center
\renewcommand\arraystretch{1.2}
\setlength{\tabcolsep}{6.pt}
\begin{threeparttable}  
\scalebox{0.7}{
\begin{tabular}{p{1.8cm}<{\centering}|
p{1.5cm}<{\centering}|
p{1.3cm}<{\centering}p{1.3cm}<{\centering}
p{1.3cm}<{\centering}p{1.3cm}<{\centering}|
p{1.3cm}<{\centering}p{1.3cm}<{\centering}
p{1.3cm}<{\centering}p{1.3cm}<{\centering}}\hline\hline

\multicolumn{2}{c|}{Dataset}
&\multicolumn{4}{c|}{GMF}
&\multicolumn{4}{c}{MLP}\\
\hline
\multicolumn{2}{c|}{Metrics}       
&\multicolumn{1}{c}{{NDCG@10}}&\multicolumn{1}{c}{{Recall@10}}
&\multicolumn{1}{c}{{AUC}}&\multicolumn{1}{c|}{{ACC}}
&\multicolumn{1}{c}{{NDCG@10}}&\multicolumn{1}{c}{{Recall@10}}
&\multicolumn{1}{c}{{AUC}}&\multicolumn{1}{c}{{ACC}}\\\hline

\multirow{9}{*}{Synthetic}
&Base & 0.8479 & 0.5364 & 0.8184 & 0.7408 & 0.8777 & 0.5822 & 0.7465 & 0.7552 \\ 
&IPS & 0.8578 & 0.5524 & 0.8102 & 0.7531 & 0.8991 & 0.575 & 0.7542 & 0.7496 \\ 
&SNIPS & 0.8634 & 0.5514 & 0.8193 & 0.7509 & 0.9016 & 0.5726 & 0.7672 & 0.7644 \\ 
&Direct & 0.8095 & 0.4398 & 0.7301 & 0.6764 & 0.8783 & 0.5355 & 0.6081 & 0.5812 \\ 
&DR & 0.8642 & 0.5787 & 0.8294 & 0.7524 & 0.8946 & 0.5641 & 0.7533 & 0.7548 \\ 
&ATT & 0.8686 & 0.5676 & 0.8219 & 0.7635 & 0.8804 & 0.5881 & 0.7559 & 0.7525 \\ 
&CVIB & 0.8875 & 0.5765 & 0.8269 & 0.7627 & 0.9004 & 0.5746 & 0.7678 & 0.7514 \\ 
&CBR(-g) & 0.897 & 0.589 & 0.8301 & 0.7631 & 0.8916 & 0.5939 & 0.7674 & 0.7506 \\ 
&CBR & \textbf{0.9034} & \textbf{0.5969} & \textbf{0.8336} & \textbf{0.7708} & \textbf{0.9047} & \textbf{0.5987} & \textbf{0.7763 }& \textbf{0.7702}
\\\hline

\multirow{9}{*}{Yahoo! R3}
&Base & 0.5162 & 0.4874 & 0.6793 & 0.5619 & 0.5089 & 0.4839 & 0.6762 & 0.5696 \\ 
&IPS & 0.6456 & 0.6001 & 0.6822 & 0.5671 & 0.6533 & 0.5889 & 0.6825 & 0.5677 \\ 
&SNIPS & 0.6404 & 0.6014 & 0.6899 & 0.569 & 0.6618 & 0.5946 & 0.6866 & 0.5701 \\ 
&Direct & 0.6361 & 0.568 & 0.6723 & 0.5766 & 0.6488 & 0.5298 & 0.6883 & 0.5801 \\ 
&DR & 0.657 & 0.5787 & 0.6849 & 0.5698 & 0.6592 & 0.5953 & 0.6922 & 0.5823 \\ 
&ATT & 0.6331 & 0.5624 & 0.6519 & \textbf{0.5835} & 0.644 & 0.5607 & 0.6609 & 0.5725 \\ 
&CVIB & 0.6564 & 0.5856 & 0.6869 & 0.5627 & 0.6614 & 0.5955 & 0.6935 & 0.5514 \\ 
&CBR(-g) & 0.6516 & 0.5758 & 0.6701 & 0.5734 & 0.6638 & \textbf{0.603} & 0.6874 & \textbf{0.5806} \\ 
&CBR & \textbf{0.6601} & \textbf{0.6027} & \textbf{0.6909} & 0.5788 & \textbf{0.6649} & 0.6015 & \textbf{0.6955} & 0.5746 
\\\hline

\multirow{9}{*}{Coat}
&Base & 0.61 & 0.6683 & 0.5778 & 0.5198 & 0.6277 & 0.6775 & 0.5926 & 0.5346 \\ 
&IPS & 0.6437 & 0.711 & 0.6224 & 0.5996 & 0.6312 & 0.685 & 0.5918 & 0.5493 \\ 
&SNIPS & 0.6461 & 0.7164 & 0.6236 & 0.5984 & 0.629 & 0.6976 & 0.6046 & 0.5751 \\ 
&Direct & 0.5362 & 0.6431 & 0.5188 & 0.5315 & 0.5012 & 0.6178 & 0.479 & 0.4014 \\ 
&DR & 0.6269 & 0.7008 & 0.6026 & 0.5729 & 0.6207 & 0.6876 & 0.5948 & 0.5228 \\ 
&ATT & 0.602 & 0.6811 & 0.5756 & 0.5985 & 0.6244 & 0.6818 & 0.5976 & 0.5311 \\ 
&CVIB & 0.6458 & 0.7102 & 0.6208 & 0.5729 & 0.6324 & 0.703 & 0.6057 & 0.5576 \\ 
&CBR(-g) & 0.656 & 0.7225 & 0.6362 & 0.6092 & 0.6314 & 0.7037 & 0.6075 & 0.5695 \\ 
&CBR & \textbf{0.6788} & \textbf{0.7344} & \textbf{0.6401} & \textbf{0.6223} & \textbf{0.6735} & \textbf{0.7318} & \textbf{0.6362} & \textbf{0.6092 }
\\\hline

\multirow{9}{*}{PCIC}
&Base & 0.4921 & 0.801 & 0.6561 & 0.641 & 0.4821 & 0.8279 & 0.6802 & 0.6537 \\ 
&IPS & 0.4808 & 0.7971 & 0.6783 & 0.6302 & 0.5381 & 0.8279 & 0.6788 & 0.6591 \\ 
&SNIPS & 0.4977 & 0.7948 & 0.6835 & 0.6331 & 0.5404 & \textbf{0.8664} & 0.6897 & 0.6508 \\ 
&Direct & 0.3152 & 0.7368 & 0.5526 & 0.6017 & 0.3125 & 0.7317 & 0.5623 & 0.6208 \\ 
&DR & 0.4987 & 0.8294 & 0.6909 & 0.6262 & 0.5139 & 0.8151 & 0.688 & 0.6468 \\ 
&ATT & 0.5053 & 0.8254 & 0.6904 & 0.6207 & 0.5381 & 0.8292 & 0.6938 & 0.6336 \\ 
&CVIB & 0.5079 & 0.8352 & 0.7036 & 0.6307 & 0.5443 & 0.8279 & 0.6936 & 0.6441 \\ 
&CBR(-g) & \textbf{0.5142} & \textbf{0.8485} & 0.7095 & 0.6247 & 0.5389 & 0.8485 & 0.7158 & 0.6635 \\ 
&CBR & 0.5033 & 0.8408 & \textbf{0.7174} & \textbf{0.6341} & \textbf{0.5661} & 0.8305 & \textbf{0.7191} & \textbf{0.665 }
\\\hline

\multirow{9}{*}{Amazon}
&Base & 0.8287 & 0.9659 & 0.5469 & 0.6441 & 0.8989 & 0.9136 & 0.5643 & 0.5762 \\ 
&IPS & 0.9344 & 0.9293 & 0.5437 & 0.6189 & \textbf{0.9826} & 0.9357 & 0.5721 & 0.6004 \\ 
&SNIPS & 0.8705 & 0.9312 & 0.5395 & 0.6448 & 0.9711 & 0.9333 & 0.5791 & 0.5988 \\ 
&Direct & 0.896 & 0.9379 & 0.5718 & 0.6751 & 0.9605 & 0.9387 & 0.552 & 0.5941 \\ 
&DR & 0.9097 & 0.9286 & 0.5467 & 0.6745 & 0.9684 & 0.9367 & 0.5546 & 0.6023 \\ 
&ATT & 0.9254 & 0.9363 & 0.6173 & 0.6881 & 0.9318 & \textbf{0.9453} & 0.5532 & \textbf{0.6098} \\ 
&CVIB & 0.8810 & 0.9348 & 0.5594 & 0.6783 & 0.9588 & 0.9337 & 0.5647 & 0.5803 \\ 
&CBR(-g) & 0.8307 & 0.9618 & 0.5529 & 0.6299 & 0.7854 & 0.9004 & 0.5788 & 0.5951 \\ 
&CBR & \textbf{0.9345} & \textbf{0.9777} & \textbf{0.6427} & \textbf{0.6908} & 0.9626 & 0.9408 & \textbf{0.5813} &0.6061 
\\\hline

\multirow{9}{*}{Anime}
&Base & 0.8828 & \textbf{0.9864} & 0.5411 & 0.8781 & \textbf{0.9910} & 0.9663 & 0.6072 & 0.8769 \\ 
&IPS & 0.8890 & 0.8506 & 0.5425 & 0.7919 & 0.9577 & 0.8617 & 0.5651 & 0.7837 \\ 
&SNIPS & 0.8908 & 0.8493 & 0.5314 & 0.8787 & 0.9612 & 0.8627 & 0.5774 & 0.7926 \\ 
&Direct & 0.9139 & 0.8543 & 0.5872 & 0.8658 & 0.963 & 0.8608 & 0.5716 & 0.8285 \\ 
&DR & 0.9001 & 0.8525 & 0.5918 & 0.8526 & 0.9591 & 0.8551 & 0.5669 & 0.8306 \\ 
&ATT & 0.9117 & 0.9273 & 0.5695 & 0.8284 & 0.9380 & 0.9289 & 0.6142 & 0.8320 \\ 
&CVIB & 0.8905 & 0.9228 & 0.5142 & 0.8203 & 0.9343 & 0.9291 & 0.5948 & 0.8316 \\ 
&CBR(-g) & 0.9279 & 0.9627 & 0.5619 & \textbf{0.8836} & 0.9602 & 0.996 & \textbf{0.6341} & 0.8809 \\ 
&CBR & \textbf{0.9857} & 0.9730 & \textbf{0.6833} & 0.8828 & 0.989 & \textbf{0.9937} & 0.6215 & \textbf{0.8811 }
\\\hline\hline

\end{tabular}
}   
\end{threeparttable}    
\begin{tablenotes}    
    \footnotesize              
    \item ``Base'' is the original model without applying any debiased method.
\end{tablenotes} 
\label{tab:overall} 
\end{table*}

\subsubsection{Implementation details.}
For Yahoo! R3, Coat and PCIC, we directly use the testing set provided in the original data to validate and evaluate different models.
For the simulation dataset, the ratio between the training and testing (including validation) sets is controlled as 3:1, where the training set is biased by the recommender model, while the testing set is unbiased.
For the other datasets, we use 20\% of each user's interactions for testing, while the others are left for training ({70\%}) and validation ({10\%}).
We evaluate different recommender models based on the well-known metrics including Recall, AUC~\cite{rendle2012bpr}, ACC~\cite{gunawardana2009survey} and NDCG.
Recall measures the overlapping between the recommended items and the ground truth.
The latter three metrics are ranking-sensitive, where the higher ranked correct items contribute more to the results. 
For each model, we generate 10 recommendations to be compared with the ground truth. 

The hyper-parameters are determined based on grid search.
In specific, the learning rate and batch size are tuned in the ranges of {{$[10^{-1}, 10^{-2}, 10^{-3}, 10^{-4}]$} and $[64,128,256,512,1024]$}, respectively.
The weighting parameter $\lambda$ is determined in $[0.001, 0.01, 0.05, 0.1, 0.5]$.
The regularization coefficients $\lambda_k$'s are all searched in $[0.001, 0.005, 0.01, 0.05]$.
The user/item embedding dimension is empirically set as 32. 
We apply our framework to different base models including \textbf{MLP} and \textbf{GMF}~\cite{he2017neural}, which are both generalized matrix factorization methods, covering different merging strategies between the user and item embeddings.
For the baselines, we set the parameters as the optimal values reported in the original paper or tune them in the same ranges as our model's. 
All the experiments are conducted based on a server with a 16-core CPU, 128g memory and RTX 5000 GPU.

\subsection{Overall Performance Comparison}
The overall comparison results can be seen in Table~\ref{tab:overall}, from which we can see:
in most cases, direct method performs worse than IPS or SNIPS.
We speculate that the recommendation dataset is quite noisy, the imputation model learned based on it can be unreliable.
Since the recommender model is learned based on a large amount of user feedback estimated from the imputation model, the performance can not be guaranteed. 
DR can usually achieve better performance than IPS, SNIPS and Direct, which agrees with the previous work~\cite{DBLP:conf/icml/DudikLL11}. Two state-of-art methods ATT and CVIB perform better than other baselines in most cases. It is because they both consider more general situation in real-world system to avoid the noise interference and have stronger theoretical guarantee. In most cases CVIB is better than ATT, it is because ATT is an extension of direct method, which depend to some extent on how well the performance of imputation model. 
Encouragingly, our model can achieve the best performance on all the datasets across different base models and evaluation metrics.
In specific, for MLP, our model can on average improve the performance of the base model by about {10.88\%, 6.87\%, 4.21\% and 4.03\%} on NDCG@10, Recall@10, AUC and ACC, respectively.
For GMF, the performance gains on these metrics are about {12.06\%, 8.27\%, 11.24\% and 5.58\%}.
This observation demonstrates the effectiveness of our idea comparing with existing mainstream debiased recommender models.
The reason can be that the baseline models usually need to introduce additional models for estimating the inverse propensity score or user feedback.
The prediction error of these models may worsen the downstream recommendation performance.
However, in our framework, all the parameters are learned in an end-to-end manner, where there is no accumulation error, and the item balancing and recommendation tasks are jointly optimized for beneficial knowledge transfer.
As a result, we can observe improved performance of our framework. We also demonstrate the reduced version of our method CBR(-g), which is same method with CBR only without confounder inference. We find that in most cases, our method CBR achieve better performance than CBR(-g). But in PCIC dataset, we find that CBR(-g) is better than CBR on NDCG and Recall under GMF base model. It is because the number of parameters in GMF is not as enough as the MLP have.

\subsection{Ablation Studies}
For better understanding our model, in this section, we conduct many ablation studies.
We compare our model with its three variants:
in \underline{\textit{CBR(clipping)}}, we use the clipping strategy proposed in section~\ref{ECB} to handle the item balancing terms.
In \underline{\textit{CBR(sampling)}}, the item balancing terms are optimized based on the sampling strategy.
In \underline{\textit{CBR(-g)}}, we do not capture the latent confounders, that is, objective~(\ref{emp-adv}) is used for model optimization.
The model parameters are set as their optimal values tuned above.
In the experiments, we report the results based on AUC, MLP and the dataset of PCIC, while similar conclusions can be drawn for the other metric, base models and datasets.
The results are presented in Figure~\ref{ab}(a). 

We find that CBR(sampling) can achieve better performance than CBR(clipping).
The reason can be that in CBR(clipping), the balancing terms (i.e., $\text{IPM}_G(\cdot; \cdot)$) are fixed in each training epoch.
The non-selected terms have no chance to be optimized, which may still reveal important signals.
In CBR(sampling), the balancing terms are chosen in a random manner, every term has an opportunity to be optimized, which can cover more important information.
From the efficiency perspective, since CBR(sampling) needs to sample item pairs in each epoch, it costs more time as indicated by the dotted line in Figure~\ref{ab}(a). 
Careful reader may also be curious about how the number of selected balancing terms (i.e., $K_1$ for CBR(clipping) and $K_2$ for CBR(sampling)) influence the recommendation performance.
To answer this question, we further conduct experiments by tuning $K_1$ and $K_2$ in the range of $\{5, 10, 15, 20, 25, 30\}$.
The results are presented in Figure~\ref{ab}(b).
We can see, the performance is in general better when we use larger $K_1$ or $K_2$, which is as expected, since more balancing requirements have been satisfied.
Actually, we have also tried to introduce all the balancing terms like objective~(\ref{emp}), but it costs too much time (more than 10 hours per epoch) to obtain the results.
By considering all the balancing terms in a feasible adversarial manner, CBR can achieve better performance and higher efficiency than both of CBR(clipping) and CBR(sampling).

The modeling of the latent confounders is also important, which is evidenced by the lowered performance of CBR(-g) comparing with CBR.
We speculate that in real-world recommendation datasets, latent confounders can be inevitable, since user personalities are too complex to be totally recorded.
Without capturing the latent confounders, the causal relation between the user and item is hard to cut down, which may intensify the distribution shift problem, and lowers the recommendation performance.

\begin{figure}[t]
\centering
\setlength{\fboxrule}{0.pt}
\setlength{\fboxsep}{0.pt}
\fbox{
\includegraphics[width=0.8\linewidth]{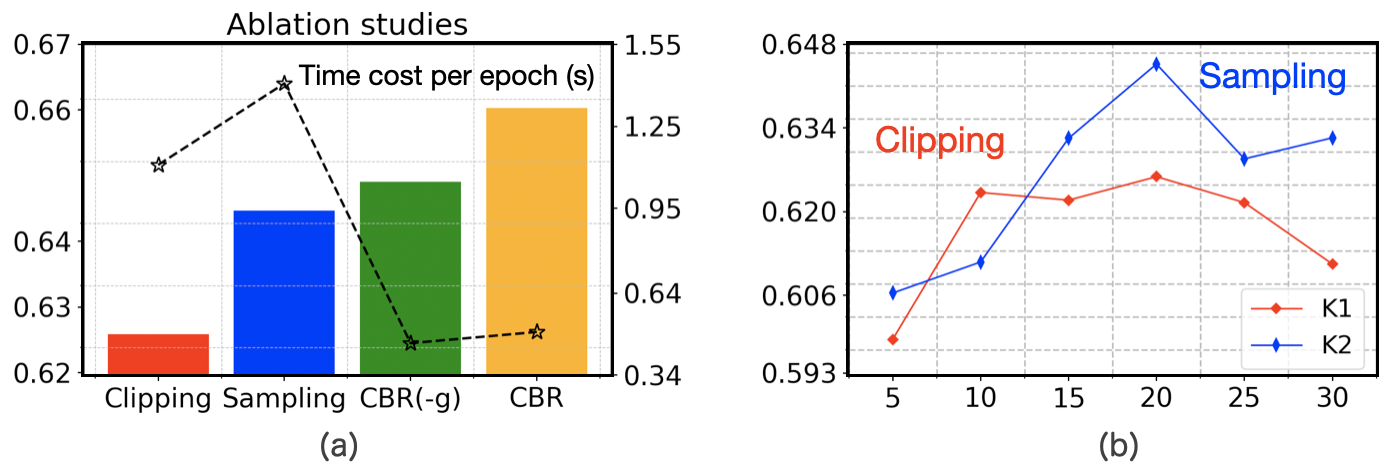}
}
\caption{
(a) Performance and time cost of different model variants.
(b) The influences of the selected balancing terms.
}
\label{ab}
\end{figure}

\begin{figure}[t]
\centering
\setlength{\fboxrule}{0.pt}
\setlength{\fboxsep}{0.pt}
\fbox{
\includegraphics[width=.5\linewidth]{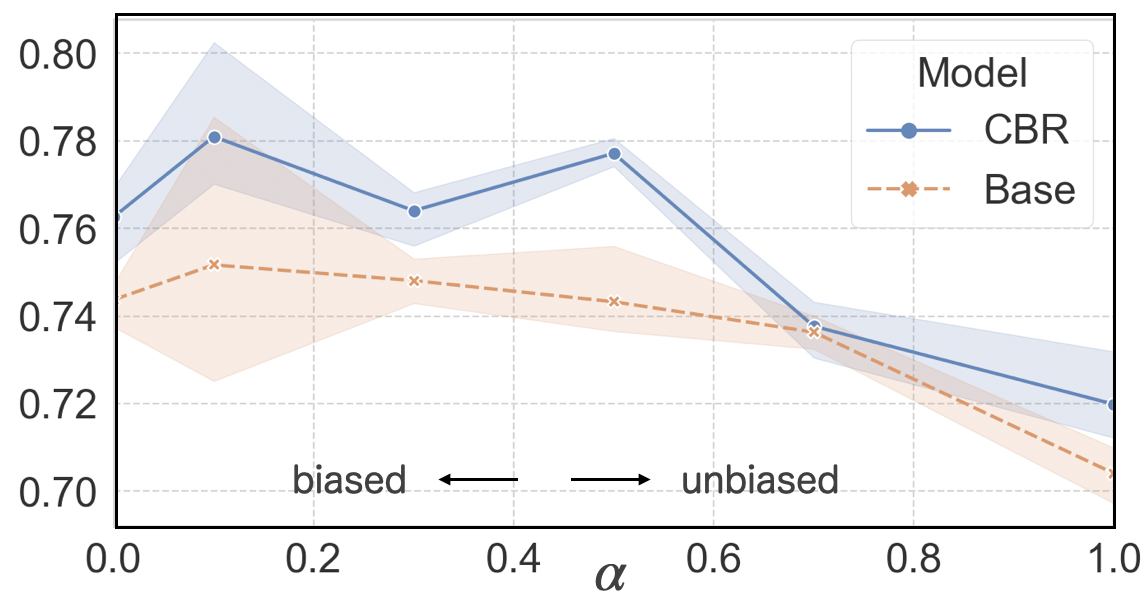}
}
\caption{
Performance of our model with different data biases.
For each $\alpha$, we repeat the experiment for {3 times}, and report the mean and standard error of the results.
}
\label{bias}
\end{figure}

\subsection{Influence of the Data Bias Severity}
In this section, we study how the data bias influence the final recommendation performance.
We base this experiment on the simulation dataset, since it allows us to flexibly tune the bias severity by the parameter $\alpha$.
Extremely, $\alpha=1.0$ means the data is unbiased, and an item is recommended in a completely random manner.
If $\alpha=0.0$, then the data is completely biased by the recommender model.
We tune $\alpha$ in the range of $\{0.0,0.1,0.3,0.5,0.7,1.0\}$, and set $\beta$ as {0.5}.
The model parameters in this experiment are as their optimal values tuned above.
Due to the space limitation, we report the results on AUC and MLP, while the conclusions on ACC and GMF are similar and omitted. 

The results are presented in Figure~\ref{bias}, from which we can see:
our model can consistently achieve better performance comparing with the base model.
This observation manifests that the effectiveness of our model holds for different data biases, and suggests that our model can be robust when being applied to different recommendation scenarios.
It is interesting to see that the performance gain of our framework is more significant when the data bias is relative large.
The reason can be that, for the base model, the parameters are directly learned based on the biased training sets.
They can not perform well on unbiased testing sets.
However, by our framework, the learning objective is specially designed to remove the bias effect, which alleviates the distribution shift between the training and testing sets, and thus can achieve better performance. 
In general, the severer the bias is, the larger the performance gap is.
Careful readers may find that when $\alpha=1.0$, there is a small gap between our framework and the base model.
The reason can be that, in the synthetic dataset, we have incorporated the latent confounders into the simulation process.
In our framework, we explicitly deploy a model to capture such confounders, while in the base model, they are totally ignored, which leads to the lowered performance.

\begin{figure}[t]
\centering
\setlength{\fboxrule}{0.pt}
\setlength{\fboxsep}{0.pt}
\fbox{
\includegraphics[width=.5\linewidth]{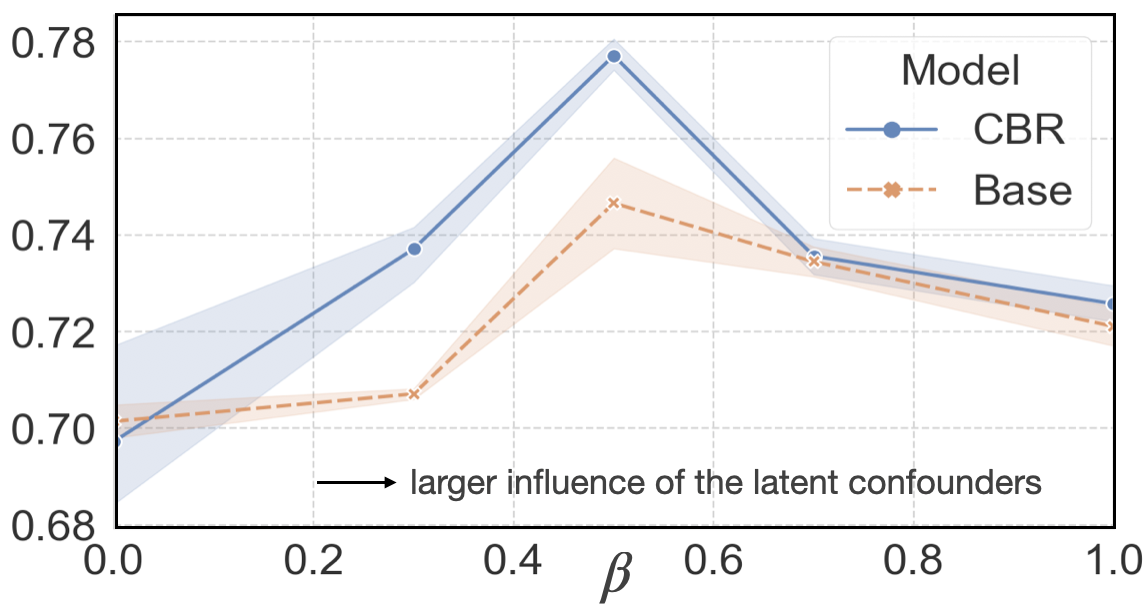}
}
\vspace{-0.cm}
\caption{
Performance of our model with different confounderness.
For each $\beta$, we repeat the experiment for {3 times}, and report the mean and standard error of the results.
}
\vspace{-0.cm}
\label{confounder}
\end{figure}

\subsection{Influence of the Confounderness}
In real-world recommender systems, there are usually many latent confounders, which are either not recorded in the system (e.g., item promotion) or hard to represent clearly (e.g., user emotions).
Such confounders can directly influence the debiasing effect.
In this section, we study the influence of latent confounders on the recommendation performance. 
The experiment is based on the simulation dataset, where $\beta$ is the parameter to control the confounderness of the dataset.
Larger $\beta$ means the dataset is more severely influenced by the latent confounders.
We tune $\beta$ in the range of $\{0.0,0.3,0.5,0.7,1.0\}$, and set $\alpha$ as 0.5.
For the other settings, we follow the above experiment.
The results are presented in Figure~\ref{confounder}, from which we can see:
when the influence of the latent confounders is very large (e.g., $\beta\geq0.7$), the performances of our framework and the base model are similar and not satisfied.
We speculate that when $\beta$ is too large, the important collaborative filtering (CF) \cite{he2018outer} signals are overwhelmed by the latent confounders.
Collaborative filtering stands on the key assumption of the recommendation task.
Without enough CF signals, neither of the compared models can achieve good performance. 
When $\beta$ falls into a reasonable range (e.g., $\beta\in [0.3,0.5]$), we can observe significant improvement of our framework comparing with the base model.
This observation demonstrates the effectiveness of our deconfounder idea, and suggests that our framework can be competitive for practical recommendation scenarios, which usually contain latent confounders.

\begin{figure}[t]
\centering
\setlength{\fboxrule}{0.pt}
\setlength{\fboxsep}{0.pt}
\fbox{
\includegraphics[width=.5\linewidth]{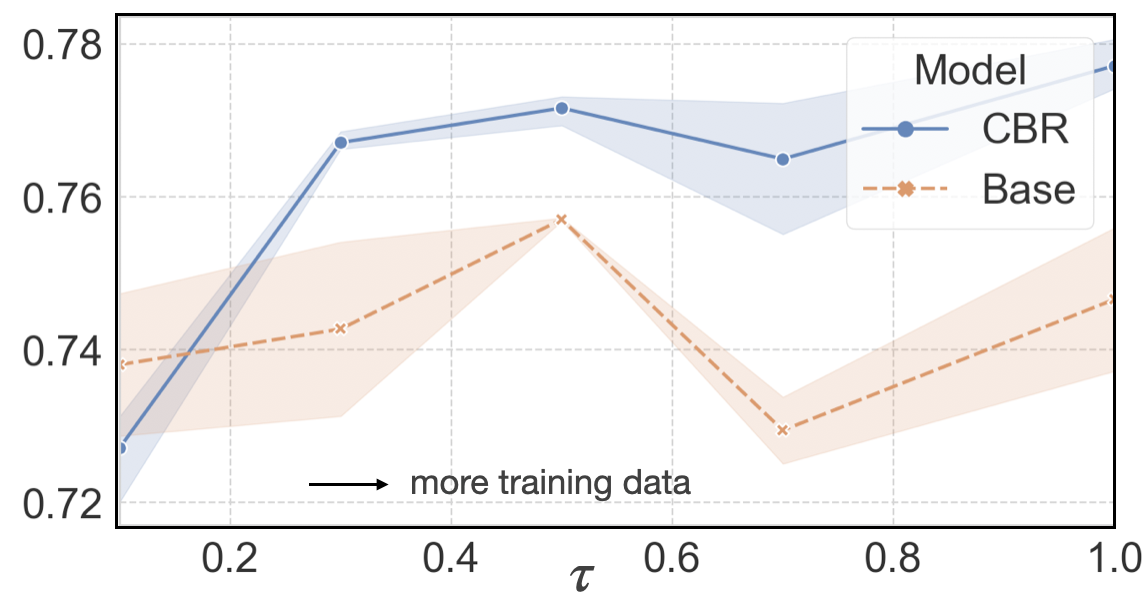}
}
\vspace{-0.cm}
\caption{
Performance of our model with different data sparsities.
For each $\tau$, we repeat the experiment for {3 times}, and report the mean and standard error of the results.
}
\vspace{-0.cm}
\label{sparsity}
\end{figure}

\subsection{Influence of the Data Sparsity}
In this section, we study the influence of data sparsity on the recommendation performance.
The settings of this experiment is similar to the above experiment, and the synthetic dataset is generated by setting $\alpha$ and $\beta$ as 0.5.
We use parts of the original training set for model optimization, and the testing set remains to be unbiased.
The ratio $\tau$ between the selected and original number of samples is tuned in $\{0.1, 0.3, 0.5, 0.7, 1.0\}$, where $\tau=1.0$ means incorporating all the training samples. 
The results are presented in Figure~\ref{sparsity}, from which we can see:
as we use more training samples, the performance of our framework continually goes up, which agrees with the learning theory that more samples can lower the bound between the empirical and expectation loss functions.
When the number of samples is relative small (e.g., $\tau\leq0.5$), all the models exhibit similar worse performances.
We speculate that, in this case, data sparsity is a dominate problem. 
As the old saying goes, ``even a clever housewife cannot cook a meal without rice''.
Without a basic amount of training samples, any model cannot perform well.
When the dataset becomes denser, the data bias comes to the major issue.
At this time, our framework can achieve much larger performance gains as compared with the base model.

\section{Conclusion}
In this paper, we propose a novel debiased recommender model based on confounder balancing.
We begin from the ideal unbiased learning objective, and derive its upper bound under the effect of a user representative function.
In order to enhance the training efficiency, we design three methods to handle the large number of item balancing terms.
We also model the potential latent confounders, which can lead to more robust optimization.

This paper actually opens a new door for debiased recommendation.
There is much room left for improvement.
To begin with, one can assume more complex confounder structures to involve reasonable prior knowledge to improve the recommendation performance.
In addition, people can also extend our idea to sequential or even graph-based recommendation, where the user-item structure information is considered into the modeling process.

\bibliographystyle{ACM-Reference-Format}
\bibliography{acmart}
\clearpage

\end{document}